\newcommand{\cmark}{\ding{51}}
\newcommand{\xmark}{\ding{55}}
\theoremstyle{definition}
\newtheorem{definition}{Definition}
\newtheorem{theorem}{Theorem}
\definecolor{mGreen}{rgb}{0,0.6,0}
\definecolor{mGray}{rgb}{0.5,0.5,0.5}
\definecolor{mPurple}{rgb}{0.58,0,0.82}
\definecolor{backgroundColour}{rgb}{0.95,0.95,0.92}
\lstdefinestyle{CStyle}{
	backgroundcolor=\color{backgroundColour},   
	commentstyle=\color{mGreen},
	keywordstyle=\color{magenta},
	numberstyle=\tiny\color{mGray},
	stringstyle=\color{mPurple},
	basicstyle=\footnotesize,
	breakatwhitespace=false,         
	breaklines=true,                 
	captionpos=b,                    
	keepspaces=true,                 
	showspaces=false,                
	showstringspaces=false,
	showtabs=false,                  
	tabsize=2,
	language=C
}
\def\BState{\State\hskip-\ALG@thistlm}
\title{On Function Description}
\author{{Rade Vuckovac 
		(\foreignlanguage{russian}{Раде Вучковац})
		}}
\begin{document}

\maketitle

\begin{abstract}
The main result is that: function descriptions are not made equal, and they can be categorised in at least two categories using various computational methods for function evaluation. The result affects Kolmogorov complexity and Random Oracle Model notions. More precisely, the idea that the size of an object and the size of the smallest computer program defining that object is a ratio that represents the object complexity needs additional definitions to hold its original assertions.

\end{abstract}

\section{Introduction}

An introduction to Kolmogorov complexity could start with Chaitin's short story~\cite{chaitin1975randomness}:
\begin{quotation}
	Suppose you have a friend who is visiting a planet in
	another galaxy, and that sending him telegrams is very expensive. He forgot to take along his tables of
	trigonometric functions, and he has asked you to supply them. You could simply translate the numbers into an
	appropriate code (such as the binary numbers) and transmit them directly, but even the most modest tables of
	the six functions have a few thousand digits, so that the cost would be high. A much cheaper way to convey
	the same information would be to transmit instructions for calculating the tables from the underlying
	trigonometric formulas, such as Euler's equation $ e^{ix} = cos x + i sin x $. Such a message could be relatively brief, yet inherent in it is all the information contained in even the largest tables.
\end{quotation}

\begin{figure}
	\centering
	\includegraphics[width=0.4\linewidth]{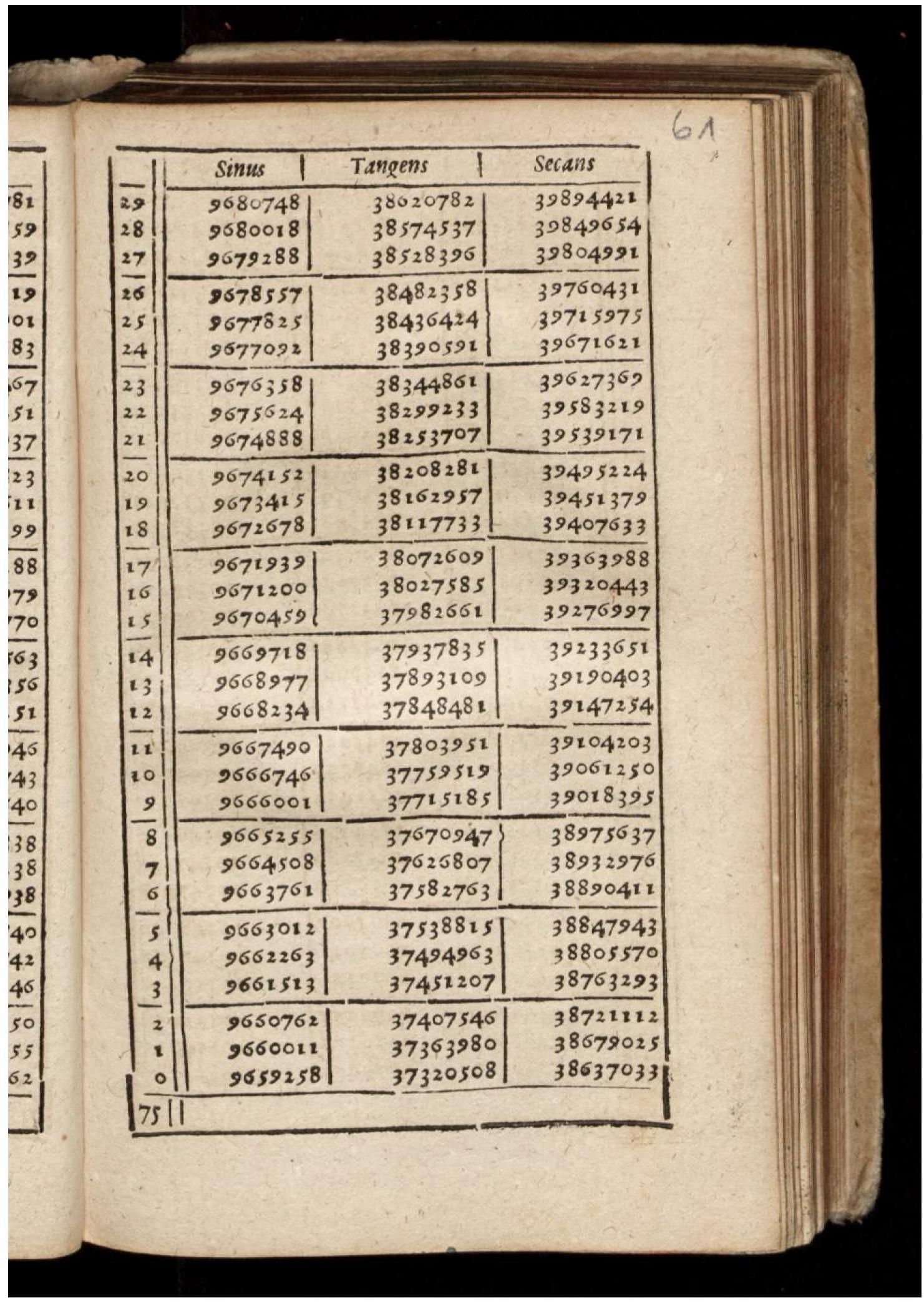}
	\caption{\small \sl Trigonometry book from 17th century; Matthias Bernegger - Manuale Mathematicum darinn begriffen/ Die Tabulae Sinuum, Tangentium, Secantium.}
	\label{fig:berneggermanuale137}
\end{figure}

The Kolmogorov complexity deals with objects and computer programs outputting it. If the computer program is more compact than the object as an output, the complexity of the object in question is considered less complicated. The Kolmogorov complexity can apply to any object. In our case, it will deal with mathematical functions. Function description is a computer program which will define an object; in our case domain, co-domain and their mappings.

Figure~\ref{fig:berneggermanuale137} shows the tables of some trigonometric functions and are represented as a string $ s $. Implementation of Euler's equation is an example of trigonometric functions description and also a description of the string $ d(s) $. Then Kolmogorov complexity $ K $ of string $ s $ is stated as:
\begin{equation}
K(s)=|d(s)|
\end{equation}
where $ |d(s)| $ is a length of a program, in our case the length of the program implementing Euler's equation.
There is also relation between $ K(s) $ and the length of a string $ |s| $:
\begin{equation}
\forall s.K(s)\leq|s| + c
\end{equation}
where $ c $ is a constant. From Chaitin's short story it becomes obvious that Euler's equation is a way shorter than Figure~\ref{fig:berneggermanuale137}:
\begin{equation}
|d(s)|<|s| + c
\end{equation}
\\
To some extent, the concept of Kolmogorov complexity was used for argumentation in the negative result of the random oracle model.
A random oracle is a black box which on a given inquiry, replies with a random
answer. A fictional story explains the concept quickly. In the black box lives
a gnome with some dice and a blank notebook. Anyone can submit a question (an
input $ q $) to the black box. When $ q $ is submitted, the gnome checks whether the input $ q $ is already in
the notebook. If it is there the gnome will respond with result $ r $ from the notebook. If $ q $ is not in the notebook, the gnome will throw its dice and the result $ r $ will be recorded to the notebook as a mapping from the query. That result $ r $ returns to the submitter. The notebook entries may look like Table~\ref{tab:randomOracle}
where $ q $ and $ r $ are sorted for easier searching.
\begin{table}[h]
	\begin{center}
		\begin{tabular}{cc|cc|cc}
			\toprule
			$q$ & $r$ & sort by $q$ & r & q &sort by $r$  \\ 
			\midrule
			 4	&	90	&	2	&	56	&	66	&	22	\\
			 27	&	35	&	4	&	90	&	27	&	35	\\
			 2	&	56	&	19	&	54	&	62	&	39	\\
			 19	&	54	&	20	&	89	&	67	&	48	\\
			 96	&	93	&	27	&	35	&	19	&	54	\\
			 98	&	99	&	62	&	39	&	2	&	56	\\
			 67	&	48	&	66	&	22	&	20	&	89	\\
			 66	&	22	&	67	&	48	&	4	&	90	\\
			 20	&	89	&	96	&	93	&	96	&	93	\\
			 62	&	39	&	98	&	99	&	98	&	99	\\
			
			\bottomrule
		\end{tabular}
	\end{center}
	\caption{\small \sl Random oracle entries and its sorted mappings.}
	\label{tab:randomOracle}
\end{table}

The availability of a random oracle serves as a security assumption during the
design and development of various cryptographic protocols. For example, most
of RSA (a well known public-key cryptography method) signing and encryption
are shown to be secure under the Random Oracle Model (ROM)~\cite{bellare1993random}.
The problem starts when we wish to implement the fictional random oracle
with a real algorithm on modern computers. The paper "The Random
Oracle Methodology, Revisited"(ROMR)~\cite{canetti2004random} shows that a random oracle is unreplaceable
by any function (generally hash function) and that proofs based on ROM are unsound. Informal arguments are:

\begin{quotation}
	An obvious failure. We first comment that an obvious maximalistic definition, which amount
	to adopting the pseudorandom requirement of~\cite{goldreich1986construct}, fails poorly. That is, we cannot require that
	an (efficient) algorithm that is given the description of the function cannot distinguish its inputoutput behavior from the one of a random function, because the function description determines 	its input-output behavior.
\end{quotation}

\begin{quotation}
	Informal Theorem 1.1 There exist no correlation intractable function ensembles.
... The proof of the above negative result relies on the fact
that the description of the function is shorter than its input.
\end{quotation}

Resolving the random oracle is a significant and challenging endeavour. Shai's (one of the authors~\cite{canetti2004random}) view is:
\begin{quotation}
	Another possible explanation is that the random oracle methodology works for the currently published schemes, due to some specific features of these schemes that we are yet to identify. That is, maybe it is possible to identify interesting classes of schemes, for which security in the Random Oracle Model implies the existence of a secure implementation. Identifying such interesting classes, and proving the above implication, is an important and seemingly hard research direction.
\end{quotation}

This work sheds light on the random oracle controversy by defining function descriptions. It puts a function description of a program into two categories. Each category has a different rank depending on what computational methods are available to each category.

\section{Function Description, a Mathematical Approach}
The trigonometrical functions have been major engineering tools for a long, long time. Figure~\ref{fig:berneggermanuale137}  shows pre-calculated values of various trigonometrical constructs including sine function. Figure~\ref{fig:sine} shows the sine description.

The $ sine $ function deals with angles of a  right triangle. The sine of such an angle $ \Theta $ is the ratio of the opposite side and hypotenuse.
\begin{equation}
	sin \Theta = \frac {Opposite}{Hypotenuse}
\end{equation}

\begin{figure}[h]
	\centering
	\includegraphics[width=0.7\linewidth]{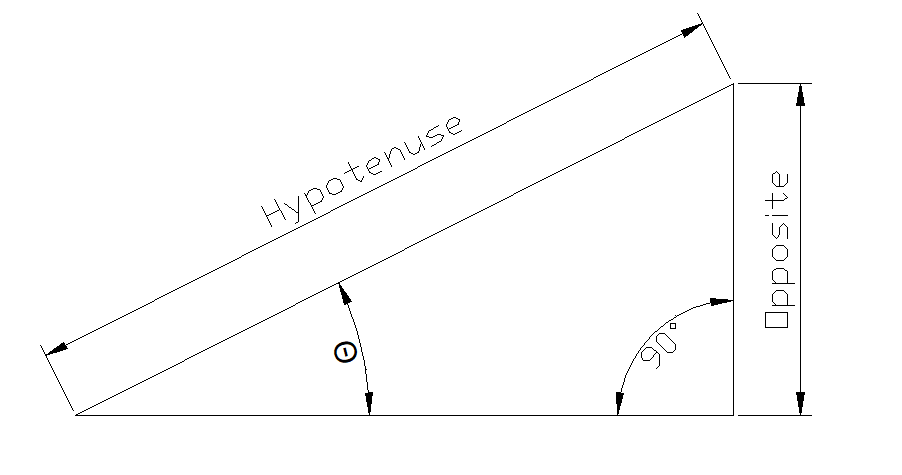}
	\caption{\small \sl Sine function; $ sin \Theta = \frac {Opposite}{Hypotenuse} $.}
	\label{fig:sine}
\end{figure}

\pagebreak

More formal definition via complex analysis is here:\\
http://us.metamath.org/mpegif/df-sin.html\\
From a computational view, an evaluation of $ sin $ function can be done in handful ways. Those ways we will call rankings.
\subsection{Rank 1}
\textbf{Rank 1;} \textit{It is an ability to compute an output providing a input via computer program.}
The example of Rank 1 is the C language $ sin $ implementation (given by Blindy:
https://stackoverflow.com/questions/2284860/how-does-c-compute-sin-and-other-math-functions)

\begin{equation}\label{taylor}
sin(x) := x - x^{3/3!} + x{^5/5!} - x^{7/7!} + ...
\end{equation}
 
Figure~\ref{lst:sine} shows an implementation.

\begin{figure}[h]
	\begin{lstlisting}[style=CStyle]
		float sin(float x)
		{
			float res=0, pow=x, fact=1;
			for(int i=0; i<5; ++i)
			{
				res+=pow/fact;
				pow*=-1*x*x;
				fact*=(2*(i+1))*(2*(i+1)+1);
			}
			return res;
		}
	\end{lstlisting}
	\caption{\small \sl $ sin $ implementation in C language; on $ 32 $-bit angle input, returns $ 32 $-bit decimal value.}
	\label{lst:sine}
\end{figure}
The description length $ |d(s)| $ is Figure~\ref{lst:sine} implementation size and it is in tens of KBytes region $ |d(s)|\approx20KB $. This particular $ sin $ function defines inputs and outputs as $ 32\,bit $ decimal values, then size of domain-range mapings is $ |s|=2*2^{32}\,bits $. Therefore we have an indication of low $ sin  $ complexity.
\begin{equation}\label{eq:description1}
|d(s)|<|s|
\end{equation}

\subsection{Rank 2}
\textbf{Rank 2;} \textit{A look-up table is used to evaluate input-output enquiries.}\\
If the function description has Rank 1, it automatically can obtain Rank 2 by creating a look-up table (for example Table~\ref{tab:sineTable}). 

For trigonometrical functions in game development, the look-up table is the preferred method. The reason is the spatial nature of some games and needs too many geometry calculations. Another advantage of the table is that they can be searched either by input or by output.

For example, the question is what angle ($ x $) produces $ 0.829 $ ($ sin(x)=0.829 $)? A searching algorithm will search for $ 0.829 $ through $ Sine $ column and will output $ 56 $ Degrees (Table~\ref{tab:sineTable}).

\begin{table}[h]
	\begin{center}
		\begin{tabular}{ccc|ccc}
			\toprule
			Index & Degrees & Sine & Index & Degree & Sine   \\ 
			\midrule
			1 & 45   & 0.7071	&	17 & 61	& 0.8746 \\ 
			2 & 46   & 0.7193   &	18 & 62	& 0.8829 \\               
			3 & 47   & 0.7314   &	19 & 63	& 0.8910 \\               
			4 & 48   & 0.7431   &	20 & 64	& 0.8988 \\               
			5 & 49   & 0.7547   &	21 & 65	& 0.9063 \\              
			6 & 50   & 0.7660   &	22 & 66	& 0.9135 \\               
			7 & 51   & 0.7771   &	23 & 67	& 0.9205 \\               
			8 & 52   & 0.7880   &	24 & 68	& 0.9272 \\              
			9 & 53   & 0.7986   &	25 & 69	& 0.9336 \\               
			10 & 54   & 0.8090   &	26 & 70	& 0.9397 \\               
			11 & 55   & 0.8192   &	27 & 71	& 0.9455 \\               
			12 & \textbf{56}   & \textbf{0.8290}   &	28 & 72	& 0.9511 \\               
			13 & 57   & 0.8387   &	29 & 73	& 0.9563 \\               
			14 & 58   & 0.8480   &	30 & 74	& 0.9613 \\               
			15 & 59   & 0.8572   &	31 & 75	& 0.9659 \\               
			16 & 60   & 0.8660   &	32 & 76	& 0.9703 \\              
			\bottomrule
		\end{tabular}
	\end{center}
	\caption{\small \sl Sine Table: Partial range from 45 to 76 degrees.}
	\label{tab:sineTable}
\end{table}


\subsection{Rank 3}
The quoted story from the beginning could continue with another twist. The communication with Earth and the faraway planet is lost. Your friend received function for calculating $ sin $, but he needs to calculate angle when the $ sin $ value is known. Essentially knowing $ y $, pair it with corresponding $ x $. With tables, that is an easy task, but they are forgotten. One way to solve that problem is to guess $ x $ and run $ sin(x) $ and see if the $ y $ in question match. That exercise might be expensive because that search is exhaustive. Since $ sin $ have mathematical structure, simple binary search saves the day.

For example, the question of what angle ($ x $) produces $ 0.829 $? The searching algorithm will:
\begin{list}{}{}
	\item 1st try $ sin(45)=0.707 $
	\item 2nd try $ \downarrow $ $ sin(30)=0.500 $
	\item 3rd try $ \uparrow $ $ sin(60)=0.866 $
	\item 4th try $ \downarrow $ $ sin(55)=0.819 $
	\item 5th try $ \uparrow $ $ sin(57)=0.839 $
	\item 6th try $ \downarrow $ $ sin(\textbf{56})=\textbf{0.829} $, \textbf{bingo!}
\end{list}

\textbf{Rank 3;} \textit{Ability to invert a function using the efficient search algorithm ($ O(log_2 n) $) without pre-calculated look-up table}.\\
If the function definition includes some input-output structure, to inverse that function the efficient search method suffices.
This method is universal because searching does not depend on a function description. The requirement is that a input-output pairs are in correlation, which is generally assumed, because function description is smaller than its input-output ($ |d(s)|<|s| $) (ROMR~\cite{canetti2004random}).\\ 
The efficiency of this approach is less than using Rank 4 function but is still efficient enough because the binary searching cost is logarithmic by nature ($ O(log_2 n) $).

\subsection{Rank 4}
\textbf{Rank 4;} \textit{Using inverse function to match output to the corresponding input}.\\
Instead using Rank 2 or 3 approach you can use calculator because almost every scientific calculator has $ sin^{-1} $ function which is inverse of $ sin $ (See Figure~\ref{fig:minussin} for example).

\begin{figure}[h]
	\centering
	\includegraphics[width=0.7\linewidth]{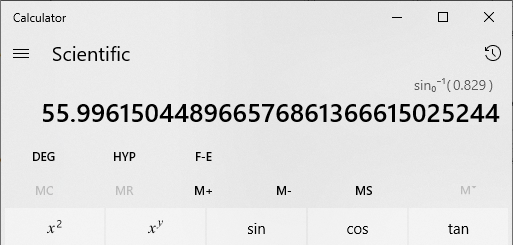}
	\caption{Win 10 Calculator; calculating $ sin^{-1}(0.829) $}
	\label{fig:minussin}
\end{figure}

The details of $ sin^{-1} $ function could be found here:\\
http://mathworld.wolfram.com/InverseSine.html

\pagebreak
 
\section{Function Description, the Same but Different}\label{sec:tsp}

Travelling salesman problem (TSP) is one of complete $ NP $ problems. $ NP $ problems are solved by exhaustive search, and the solution is verified efficiently (polynomial cost). "Complete" qualifier means that if one of $ NP $ complete problems are solved efficiently (not by brute force) all $ NP $ problems are solvable efficiently (polynomial cost). 

\begin{table}[h]
	\begin{center}
		\begin{tabular}{ >{\centering\arraybackslash} m{4cm}| >{\centering\arraybackslash} m{5cm} }
			\toprule
			TSP Graph & TSP Matrix\\
			\midrule
			
				\includegraphics[width=1\linewidth]{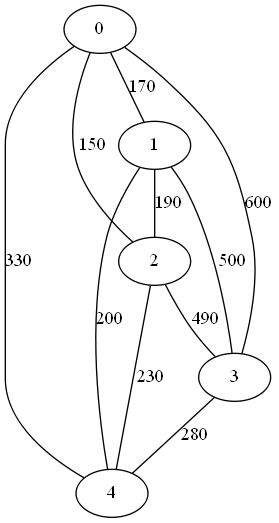}
				
			&
				
				\bordermatrix{ & 0 & 1 & 2 & 3 & 4 \cr 
					0 & 0 & 170 & 150 & 600 & 330  \cr
					1 & 170 & 0 & 190 & 500 & 200  \cr
					2 & 150 & 190 & 0 & 490 & 230  \cr 
					3 & 600 & 500 & 490 & 0 & 280  \cr
					4 & 330 & 200 & 230 & 280 & 0  \cr
				}
				
			\\
						\bottomrule
		\end{tabular}
	\end{center}
	\caption{\small \sl Cities distance table in $ km $ (a graph and matrix descriptions).}
	\label{tab:cityTable}
\end{table}

\begin{figure}[h]
	\centering
	\includegraphics[width=0.8\linewidth]{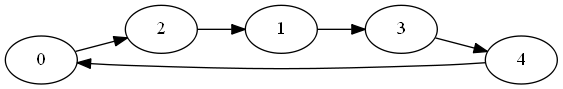}
	\caption{Permutation of a travel path $\sigma=\protect\begin{pmatrix} 0&1&2&3&4&0 \protect\\0&2&1&3&4&0 \protect\end{pmatrix}$.}
	\label{fig:travel}
\end{figure}

TSP starts with list of cities and their distances (Table~\ref{tab:cityTable}), where cities are enumerated as $ 0,1,2,3,4 $. 

The path is the list of cities, one permutation of that list. The number of permutations $ p $  is permutations with no repetitions $ p=n! $ where $ n $ is a number of cities. One path permutation is shown in Figure~\ref{fig:travel}.

Given the distances and a specific path, the distance can be calculated by Figure~\ref{fig:TSPfunct} program.

\begin{figure}[h]
	\begin{lstlisting}[style=CStyle]
	int getTravelingDistance(int *2dimArr, int *tsPath, int numberOfcities)
	{
		int i, distance = 0;
	
		for (i = 0; i < numberOfCities; i++)
			distance+=2dimArr[tsPath[i]][tsPath[i+1]];
	
		return distance;
	}
	\end{lstlisting}
	\caption{\small \sl gTD (getTravelingDistance) function. Given the number of cities, their distances and a path program returns path length.}
	\label{fig:TSPfunct}
\end{figure}

From our Table~\ref{tab:cityTable} and path from Figure~\ref{fig:travel} the arguments for function getTravelingDistance are:
\begin{itemize}
	\item $ 2dimArr $; Matrix Table~\ref{tab:cityTable}
	\item $ tsPath = \{0,2,1,3,4,0\} $ (indexes of visited cities Figure~\ref{fig:travel})
	\item $ numberOfcities=5 $
\end{itemize}
The function output is $ distance=150+190+500+280+330 $

\subsection{Rank 1}\label{tsp_rank1}
gTD (Figure~\ref{fig:TSPfunct}) satisfies Rank 1 requirements, the same as the $ sin $ program case (Figure~\ref{lst:sine}). It defines set of inputs, set of outputs and every individual member's mappings. 

However gTD size $ |d(s)| $ and the list of all input-output mappings $ |s| $ are in different relation than $ sin $ (Equation~\ref{eq:description1})
\begin{equation}\label{tsp_kol}
|d(s)|<<|s|
\end{equation}
because number of the possible travel paths permutations (inputs) rises exponentially $ |s|=n! $ (where $ n $ is number of cities) while gTD size $ |d(s)| $ remains constant.

\subsection{Rank 2} 
If Rank 1 exists, then Rank 2 is given because mappings can be pre-calculated and evaluated via a look-up table. In the travelling distance case, that approach is impractical because of the exponential table dependence on input.

\subsection{Rank 3 and 4} 
There is a similar connection between Rank 3 and Rank 4 as is in Rank 1-2. If a function has some mathematical structure, there is a good chance to find an inverse function as well (for example, $ sin $ and $ sin^{-1} $).

$ sin $ Rank 3 depends on the ordered structure of input-output pairs. Regarding gTD sorting, there is an immediate problem because gTD path instances do not have clear criteria of how to do a meaningful sort on path permutations. Table~\ref{tab:gTDmapping} shows a couple of gTD calculated distances sorted by the distance in ascending order.

\begin{table}[h]
	\begin{center}
		\begin{tabular}{cc}
			\toprule 
			path permutation&distance  \\ 
			\midrule 
			$0,2,1,3,4,0$ & $1450$  \\ 
			 
			$0,1,2,3,4,0$ & $1460$  \\ 
			
			$2,3,0,1,4,2$ & $1690$  \\ 
			  
			$3,2,1,4,0,3$ & $1810$  \\

		\end{tabular} 
	\end{center}
\caption{\small \sl gTD partial input-output mapping.}
\label{tab:gTDmapping}
\end{table}

It starts to resemble the random oracle Table~\ref{tab:randomOracle} where only one column is sorted. 
There is a possibility that path permutation column (Table~\ref{tab:gTDmapping}) has some meaningful distribution after all. That eventuality leads to solving all $ NP $ complete problems efficiently ($O(logn)$), which is believed unlikely (\cite{gasarch2002p}~and~\cite{rosenberger2012p}). 

The essence of Rank 3 approach is to search for mappings in the Table as it happens in Rank 2 but not creating the complete Table. Instead, it calculates only needed mappings according to the table pattern. In gTD case, if we need to know short distances and their paths, search for them will calculate permutations in the top table region. Since search algorithm cost is in $ O(log n) $ range, despite the exponential amount of inputs ($ n! $), the overall cost is still polynomial. 

The Rank 4 approach for the same sort of questions will match particular permutation with desired distance and again as in Rank 3 solve TSP.

\section{Rank 3 Discussion}

Straight forward candidate for a computer program having only Rank 1 and 2 ability is Blum Blum Shub (BBS). It is a secure pseudo-random generator, and it is interesting because its security is tied to the hard problem (in this case, factoring \cite{blum1986simple}). Practically, if a pseudo-random stream (an output) created by BBS is presented, then finding matching seed (an input) leaves you with either exhaustive search or factoring large composite number which is known to be a hard problem ($ NP $).

Since factoring is not NP-complete problem and also there is a quantum computer algorithm for solving factoring in $ P $ \cite{shor1994algorithms}, stronger evidence for ranking categorisation is needed. The possible next step is to show how NP-complete problems relate to ranking categorisation. We can use TSP to show if it has Rank $ 3 $ implementation, then searching for the mappings of interest will be very efficient ($ O(log_2n) $).

The Branching Theorem could be considered as an even stronger argument for ranking categorisation. The argumentation starting point is Cyclomatic Complexity (CYC)~\cite{mccabe1976complexity}.

\subsection{Preliminaries}

Every source code could have a decision point where a program can decide which way execution might go. The Table~\ref{tab:CYC} first column is an example of a program with one branching where vertex $ 2 $ is that decision point. Depending on the program arguments, the execution of it can happen in two ways (second and third column). With added edge (dotted), the control graphs can be discussed in cyclic graph terms~\cite{pemmaraju2003cycles}. A graph with $n$ vertices and a single cycle going through them all is called a cycle graph $C_n$~\cite{pemmaraju2003cycles}. The control graph from Table~\ref{tab:CYC} have two-cycle graphs (Cycle 1 and Cycle 2). McCabe showed that every decision point in the graph (program flow control) doubles the number of cycle graphs~\cite{mccabe1976complexity}.

To simplify definitions, the source code and its flow control graph in our discussion will be a programming function $ pf $.

\theoremstyle{definition}
\begin{definition}{\textit{Programming function ($pf$).}}\\
	$ pf $ is a process which performs a some desired  task. It accepts an input and returns an output ($ y=pf(x) $), thus mimicking mathematical notion of function $ y=f(x) $ where $ x $ is input and $ y $ is output. An example of $ pf $ is the C language $ sin $ implementations (Figure~\ref{lst:sine}). The same is valid for $ getTravelingDistance $ function (Figure~\ref{fig:TSPfunct}). Both have set of arguments which are input to the function
	\[ \textit{float sin(\hl{float x})} \]
	and return value which is actually output.
	\[ \textit{\hl{float} sin(float x)} \] 
	Another $ pf $ requirement is: the function execution flow has to have a single entry and exit point. It enables the treatment of $ pf $ in graph cycle terms. Examples of flow controls can be seen in Table~\ref{tab:CYC} where the blue vertex means entry point and red one exit point. 
\end{definition}

\theoremstyle{definition}
\begin{definition}{\textit{Redirected programming function ($rf$).}}\\
	$ rf $ is essentially $ pf $ and differs from $ pf $ in only one detail. Instead outputting prescribed output as $ pf $ would do a $ rf $ will return a execution path of $ pf $ (a cycle). For example, co-domain $ Y $ of the now redirected program ($ rf $) from Table~\ref{tab:CYC} will be $Y=\{Cycle 1,\, Cycle 2\}$. Note that some functions as our $ sin $ implementations (Figure~\ref{lst:sine})    have only one output cycle (execution path).
\end{definition}

\begin{table}[h]
	\begin{center}
		\begin{tabular}{ >{\centering\arraybackslash} m{3.5cm} >{\centering\arraybackslash} m{3.5cm} >{\centering\arraybackslash} m{3.5cm}}
			\toprule
			
			Flow control graph & Cycle 1 & Cycle2\\
			
			\midrule
			
			\includegraphics[width=0.6\linewidth]{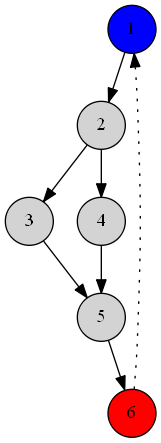}&
			\includegraphics[width=0.33\linewidth]{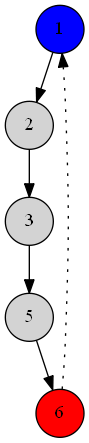}&
			\includegraphics[width=0.33\linewidth]{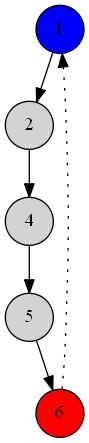} \\
			
			\midrule
			
			\bordermatrix{ & 1 & 2 & 3 & 4 & 5 & 6 \cr 
				1 & 0 & 1 & 0 & 0 & 0 & 0 \cr
				2 & 0 & 0 & 1 & 1 & 0 & 0 \cr
				3 & 0 & 0 & 0 & 0 & 1 & 0 \cr 
				4 & 0 & 0 & 0 & 0 & 1 & 0 \cr
				5 & 0 & 0 & 0 & 0 & 0 & 1 \cr
				6 & \textcolor{gray}{1} & 0 & 0 & 0 & 0 & 0 \cr}&
			
			\bordermatrix{ & 1 & 2 & 3 & 5 & 6 \cr 
				1 & 0 & 1 & 0 & 0 & 0 \cr
				2 & 0 & 0 & 1 & 0 & 0 \cr
				3 & 0 & 0 & 0 & 1 & 0 \cr 
				5 & 0 & 0 & 0 & 0 & 1 \cr
				6 & \textcolor{gray}{1} & 0 & 0 & 0 & 0 \cr}&
			
			\bordermatrix{ & 1 & 2 & 4 & 5 & 6 \cr 
				1 & 0 & 1 & 0 & 0 & 0 \cr
				2 & 0 & 0 & 1 & 0 & 0 \cr
				4 & 0 & 0 & 0 & 1 & 0 \cr
				5 & 0 & 0 & 0 & 0 & 1 \cr
				6 & \textcolor{gray}{1} & 0 & 0 & 0 & 0 \cr}\\
			
			\bottomrule
				
		\end{tabular} 
	\end{center}
	\caption{\small \sl Flow control graph and its cycle decomposition. Edge addition (dotted) makes graphs cyclic.}
	\label{tab:CYC}
\end{table}

\subsection{Branching Argument}

The $rf$ redirected functions share the same domain $ X $ and the same function description (plus output redirection) with the programming functions $ pf $. The cardinality of $RF$ (set of all $ rf $) is exactly the same as of $PF$ (set of all $ rf $) . 
\begin{equation}
|RF|=|PF|
\end{equation}

Now we can ask what is $Ranking\,3$ status of $rf:X \rightarrow Y$ ?

\begin{theorem}{\textit{Branching Theorem.}}\label{th:branch}
	There exists at least one $rf:X \rightarrow Y$ mapping not satisfying $Ranking\,3$ requirement.
\end{theorem}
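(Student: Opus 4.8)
The plan is to exhibit a single redirected function $rf$ whose inversion provably cannot be carried out by the logarithmic search that $Ranking\,3$ demands, and to do so by exploiting the branching structure already set up in the preliminaries. Recall that $Ranking\,3$ succeeds only when the input--output pairs are correlated, that is, when there is an ordering of the domain along which the outputs vary searchably, since a binary search halves the candidate interval only if a ``too high / too low'' verdict is meaningful. The whole strategy is therefore to construct an $rf$ for which no such ordering of $X$ exists relative to the cycle-labels in $Y$, so that a single probe conveys no information and the search collapses to exhaustion.

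First I would fix a concrete $pf$ obtained from $getTravelingDistance$ (Figure~\ref{fig:TSPfunct}) by inserting a single decision point that tests whether the computed distance falls below a threshold $T$. By McCabe's doubling rule this introduces exactly two cycles, so the redirection gives $rf:X \rightarrow \{Cycle\,1,\,Cycle\,2\}$, where $X$ is the set of $n!$ path permutations, $Cycle\,1$ collects the ``short'' tours and $Cycle\,2$ the rest. Next I would observe that inverting this $rf$ --- given the label $Cycle\,1$, return a permutation that maps to it --- is precisely the decision version of TSP at threshold $T$.

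The core step is to argue that this inversion cannot be performed in $O(\log_2 n)$. Section~\ref{sec:tsp} has already noted (Table~\ref{tab:gTDmapping}) that the permutation column admits no meaningful sort correlated with the distance output; consequently the preimages $rf^{-1}(Cycle\,1)$ and $rf^{-1}(Cycle\,2)$ are interleaved throughout any ordering of $X$, exactly as in the unsorted column of the random oracle Table~\ref{tab:randomOracle}. A binary search on such an interleaving learns nothing from one evaluation, so locating a preimage degenerates to exhaustive search over the $n!$ inputs. Hence this $rf$ fails $Ranking\,3$, and since one witness suffices, the theorem follows.

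The hard part is to make rigorous the claim that no ordering of $X$ correlates with the cycle-labels. As stated the argument is conditional: a genuinely efficient ($O(\log_2 n)$) inversion of the construction above would yield an efficient solver for the TSP decision problem and thereby place an $NP$-complete problem in $P$, which is believed false (\cite{gasarch2002p}~and~\cite{rosenberger2012p}). To obtain an unconditional witness I would instead let the single branch test a predicate whose output bit is provably unpredictable from any short prefix of an ordering, forcing the two preimages to interleave by a counting or adversary argument rather than by appeal to $P \neq NP$; reconciling ``unpredictable enough to defeat binary search'' with ``still computed by an honest $pf$'' is the delicate point the full proof must settle.
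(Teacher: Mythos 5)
There is a genuine gap, and you have named it yourself: your argument establishes Theorem~\ref{th:branch} only conditionally on $P \neq NP$, while the theorem is stated (and the paper intends it) as an unconditional existence claim. Your witness --- $getTravelingDistance$ with a threshold branch --- fails $Ranking\,3$ only if the TSP search problem admits no efficient solution: a binary search over the $n!$ permutations costs $O(\log_2 n!) = O(n\log n)$ probes, so $Ranking\,3$ for your $rf$ would indeed solve TSP efficiently, but the converse step (``TSP is believed hard, hence the preimages interleave under every ordering'') is exactly the open conjecture, not a proof. Your closing paragraph, which proposes replacing the TSP branch by a predicate ``provably unpredictable from any short prefix of an ordering,'' is not a fix but a restatement of the theorem itself: any predicate computed by an honest $pf$ has a short description, and ruling out \emph{every} ordering and \emph{every} efficient search strategy for such a predicate is precisely the content of the claim to be proved. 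Notably, the paper anticipates and sets aside your route: the Rank 3 Discussion presents the TSP argument (and the Blum Blum Shub example) as merely suggestive, saying ``stronger evidence for ranking categorisation is needed,'' and introduces the Branching Theorem expressly as a stronger, non-conditional argument.

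The paper's own proof proceeds quite differently: it is a global proof by contradiction rather than a single-witness construction. Assuming \emph{every} $rf:X \rightarrow Y$ has $Ranking\,3$, it argues (i) that programs can be written with arbitrary, purposeless branching whose input--output behaviour is undefined, illustrated by the $3x+1$ iteration of Equation~\ref{form:3x+1}, where each branching decision behaves like a fairly flipped coin; and (ii) that universal $Ranking\,3$ would make the selection construct redundant --- one would know each input's execution path in advance without testing any branch --- which clashes with the essential role of selection in the B{\"o}hm--Jacopini characterization of computation~\cite{bohm1966flow}. In fairness, the paper's argument is itself informal (it concedes the $3x+1$ example can ``illustrate (not prove)'' the point), so neither text meets the standard of a rigorous proof; but your version has the concrete, nameable defect that, even granting every step, it delivers only a statement conditional on an open conjecture, whereas the paper's redundancy-of-branching contradiction at least aims at, and the theorem asserts, an unconditional conclusion.
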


\begin{proof}
	We can assume opposite, that every $rf:X \rightarrow Y$ has a $ Ranking\, 3 $ attribute. In that case, a mapping search from any cycle (output) from some flow control graph, to the corresponding input is efficient one ($ O(log_2n) $). On first glance, that is a reasonable statement because when $ pf $ is properly specified every case (cycle) behaviour should be fully defined. On the other hand, there are at least two problems with all $ rf $ having $ Ranking\, 3 $  status:
	\begin{itemize}
		\item $ pf $ are not always made with a purpose. The program could be written with arbitrarily many branching statements positioned in a program everywhere. If it compiles, it will run, but output behaviour will be undefined. 
		
		$ 3x+1 $ iterations~\cite{lagarias19853} can illustrate (not prove) the point. Take any positive integer $ x $ and if it is even divide it with $ 2 $ else multiply it with $ 3 $ and add $ 1 $. Iterate this procedure until the outputs start to repeat itself~\footnote{we do not even know if it will stop on every input!}.  One iteration step formula is:
		
		\begin{equation}\label{form:3x+1}
		f(x) = \left\{\begin{array}{lr} x/2 & if\quad x\equiv 0 \\
		3x+1 & if\quad x\equiv 1
		\end{array}\right. (mod\enskip2)
		\end{equation}
		
		To paraphrase Lagarias~\cite{3x+1}, $ 3x+1 $ transformation appears to be without any mathematical structure and every arbitrarily chosen iteration (branching decision) behave like a fairly flipped coin.
		
		\item If every $ rf $ has $Ranking\,3$ implementation then branching algorithmic structure is redundant. We will know the partition of inputs and their corresponding execution paths cycles, because there is some structure, which allows binary searches, in every $ rf $ mapping. Practically, we will know which path to execute for some input in advance without the need to test the branching statement. Therefore, we can write any $ rf $ (and consequently any $ pf $) without $ if/else $ alike statements. That situation will make programming endeavour very interesting because branching is one of the algorithm essentials. Note, B{\"o}hm-Jacopini theorem~\cite{bohm1966flow} shows that the operations and structure of only three statements (sequence, \textbf{\textit{selection}} and iteration) is a Turing-equivalent system of computation.
	\end{itemize}
	
\end{proof}

\section{Conclusions}

\begin{table}[h]
	\begin{center}
		\begin{tabular}{cccc|c}
			
			\toprule 
			
			Rankings & \textit{Rank 1} & \textit{Rank 2} & \textit{Rank 3} & \textit{Rank 4} \\ 
			\midrule
			Description & initial & initial & initial & inverse\\
			\midrule 
			Direction & forward & forward and inverse  & inverse & inverse \\ 
			
			\midrule
			 
			Category 1 & \cmark & \cmark & \cmark & \cmark  \\ 
			 
			Category 2 & \cmark & \cmark & \xmark & \xmark  \\
			
			\bottomrule
			
		\end{tabular}
	\end{center} 
	\caption{\small \sl Ranking attributes; The function and its description are used for the first three rankings. \textit{Rank 4} uses the inverse function of the initial one. The direction forward and inverse indicates evaluation way (given $ x $ or $ y $). On end, we have two categories of functions.}
	\label{tab:category} 
\end{table}

For the discussion \textit{Rank 4} is not essential. We could assume that if some function does not have \textit{Rank 3} it probably does not have \textit{Rank 4}. If the Theorem~\ref{th:branch} holds then we have \textit{Category 2} ($ C2 $) type of functions.

C2 functions have only one option to do the inverse, and that is by look-up table (\textit{Rank 2}). In the case when domain $ X $ is large, making pre-computation prohibitive, the only option left to evaluate function is to calculate a single instance of it. The same phenomena as C2 function called computational irreducibility were investigated by~\cite{wolfram2002new}~and~\cite{israeli2004computational}. The quote is from Computational Irreducibility~\cite{rowlandirreducibility} page:
\begin{quotation}
	The principle of computational irreducibility says that the only way to determine the answer to a computationally irreducible question is to perform, or simulate, the computation.  
\end{quotation}

The computational irreducibly was often associated with cellular automaton. The chaos theory provides another source of C2 functions. For example, $ n $ body problem does not have a practical mathematical solution. The only way to predict the future configuration after an extended time of moving bodies is to do simulation.

In practice, whenever the C2 functions are evaluated, the outcome shall be unexpected. There is a line of argumentation that puts C2 functions and random oracle (RO) in the same basket. C2 has capabilities as RO because it will produce an identical answer (output $ y $) from the same inquires (input $ x $). Every new inquiry, C2 will answer with surprising output $ y $. If $ y $ is somehow expected, then the particular function will have \textit{Rank 3} implementation and in some way ordered outputs. Essentially, \textit{Rank 3} means input/output correlation and C1. No \textit{Rank 3} means randomness and C2.   

\subsection{Kolmogorov Complexity; Additional Definitions}
As we saw in previous sections, some problems such as TSP are straightforward to state indicating low complexity; please see Subsection~\ref{tsp_rank1} and Equation~\ref{tsp_kol}. On the other hand, TSP mapping and its behaviour are undefined. To find some details such as the shortest path is believed to be hard and only exhaustive search method exists. 

The way around this discrepancy is to treat the TSP problem as a partially defined function. The fully described program should include a look-up table description as well. Table~\ref{tab:randomOracle} is an example where entries are sorted by input and output. In that way, an algorithm could find the required details efficiently ($ O(log_2n) $). In the TSP case, table size will rise exponentially with many cities impacting the size of the program and its complexity. Therefore, program size $ d(s) $ and look-up table size $ T $ are approximately the same as the size of all mappings $ |s| $ .
\begin{equation}
|d(s)|+|T|\approx|s|
\end{equation}
In that way, Kolmogorov Complexity will be high for TSP as should be (because TSP is NP-complete problem).

\subsection{Random Oracle Adjustment}
In similar vain as Kolmogorov Complexity, RO concept needs adjustment. In RO, full function description plays the main role. Negative result is argued by ratio of function description $ |d(s)| $ and size of the mappings $ |s| $.
\begin{equation}
|d(s)|<<|s|
\end{equation}
where shorter program must define behaviour of large function mappings size (in cryptography, modern hash output size is $ 256-512 $ bits).

That assertion does not work for partially defined functions( $ C2 $) which allows evaluation of single input without the need of whole mapping behaviours via exponentially sized look-up tables. This phenomenon enables trivial construction of one-way function~\cite{vuckovac2018one} (easy evaluation and exhaustive search for inverting).  

\bibliographystyle{unsrt}
\bibliography{sas}

\end{document}